\documentclass[conference]{IEEEtran}
\IEEEoverridecommandlockouts

\usepackage{cite}
\usepackage{amsmath,amssymb,amsfonts}
\usepackage[ruled,vlined]{algorithm2e}
\usepackage{graphicx}
\usepackage{textcomp}
\usepackage{xcolor}
\usepackage{float}
\usepackage{url}
\def\BibTeX{{\rm B\kern-.05em{\sc i\kern-.025em b}\kern-.08em
    T\kern-.1667em\lower.7ex\hbox{E}\kern-.125emX}}

\usepackage{amsmath,amsfonts,bm}

\def\eqref#1{equation~\ref{#1}}

\def\1{\bm{1}}

\DeclareMathAlphabet{\mathsfit}{\encodingdefault}{\sfdefault}{m}{sl}
\SetMathAlphabet{\mathsfit}{bold}{\encodingdefault}{\sfdefault}{bx}{n}

\usepackage{amsthm}
\usepackage{mathtools}
\usepackage{bbm}

\newtheorem{theorem}{Theorem} 
\newtheorem{lemma}[theorem]{Lemma}

\newtheorem{corollary}[theorem]{Corollary} 

\theoremstyle{definition}
\newtheorem{definition}{Definition}

\DeclarePairedDelimiterX{\infdivx}[2]{[}{]}{%
  #1\;\delimsize\|\;#2%
}
\DeclarePairedDelimiterX{\infdivmi}[2]{[}{]}{%
  #1\;;\;#2%
}
\newcommand{\MI}{I\infdivmi} 

\DeclarePairedDelimiter{\norm}{\lVert}{\rVert}
\DeclarePairedDelimiter{\abs}{\lvert}{\rvert}
\DeclarePairedDelimiterX{\innerProd}[2]{\langle}{\rangle}{%
    #1,#2%
}

\def\Oh{\mathcal{O}}

\newcommand{\Reals}{\mathbb{R}}

\newcommand{\Nats}{\mathbb{N}}

\newcommand{\defeq}{\stackrel{def}{=}}
\newcommand{\Prob}{\mathbb{P}}

\newcommand{\Unif}{\mathrm{Unif}}
\newcommand{\Exp}{\mathbb{E}}

\newcommand{\Bernoulli}{\mathrm{Bern}}
\newcommand{\lvlset}{\mathcal{H}}

\newcommand{\GeomDist}{\mathrm{Geom}}

\newcommand{\Ent}{\mathbb{H}}

\DeclareMathOperator{\enc}{\mathrm{enc}}
\DeclareMathOperator{\dec}{\mathrm{dec}}
\DeclareMathOperator{\XSpace}{\mathcal{X}}
\DeclareMathOperator{\YSpace}{\mathcal{Y}}
\DeclareMathOperator{\ZSpace}{\mathcal{Z}}

\makeatletter
\newcommand{\removelatexerror}{\let\@latex@error\@gobble}
\makeatother

\usepackage{cleveref}
    
\begin{document}

\title{Singular Relative Entropy Coding\\ with Bits-Back Rejection Sampling
\thanks{The authors acknowledge financial support from Imperial College London through an Imperial College Research Fellowship grant awarded to GF. SH acknowledges support from the Natural Sciences and Engineering Research Council of Canada.}
}

\author{\IEEEauthorblockN{Gergely Flamich}
\IEEEauthorblockA{Imperial College London \\
London, UK \\
g.flamich@imperial.ac.uk}
\and
\IEEEauthorblockN{Spencer Hill}
\IEEEauthorblockA{Queen's University \\
Kingston, Canada \\
spencer.hill@queensu.ca}
}

\maketitle
\begin{abstract}
A relative entropy code for a source $X \sim P_X$ is a stochastic code that encodes random samples from a prescribed $P_{Y \mid X}$ using as few bits as possible.
A generalisation of entropy coding, it is a standard result that the minimum number of bits required to achieve this is at least the mutual information $\MI{X}{Y}$.
However, a particularly fascinating feature of relative entropy coding compared to entropy coding is that, in general, this lower bound is only achievable to within an additional logarithmic factor.
As such, an important research direction is to identify channels where we can reduce this gap.
\par
Sriramu and Wagner \cite{sriramu2024optimal} achieved such success by exhibiting a relative entropy code for so-called singular channels with sub-logarithmic asymptotic redundancy.
However, their code is quite involved and, sadly, cannot be implemented in practice.
\par
In this paper, we construct the bits-back rejection sampler (BBRS), a relative entropy code that combines ideas from bits-back coding and (greedy) rejection sampling.
Our analysis of BBRS reveals that the algorithm achieves the same asymptotic
efficiency as Sriramu and Wagner's sampler, but with much simpler analysis and
better constants. Moreover, BBRS can be implemented using standard relative entropy coding methods. 
\end{abstract}

\begin{IEEEkeywords}
relative entropy coding, channel simulation, singular channel, rejection
sampling, greedy rejection sampling, bits-back coding.
\end{IEEEkeywords}

\section{Introduction}
\par
Given a pair of dependent random variables $X, Y \sim P_{X, Y}$ defining the channel $X \to Y$, the goal of channel simulation is to find a \textit{stochastic code} \cite{flamich2026data} that, given an input $X \sim P_X$, encodes a single sample $Y \sim P_{Y \mid X}$ in finitely many bits.
More formally, a stochastic code is a triplet $(Z, \enc, \dec)$, where
\begin{itemize}
\item $Z$ is a random variable independent of $X$ shared between the sender and receiver, called the \textit{common randomness}.
\item $\enc: \XSpace \times \ZSpace \to \{0, 1\}^*$ is the \textit{encoder} mapping an
  input and a realisation of $Z$ to a variable-length binary code.
\item $\dec: \{0, 1\}^* \times \ZSpace \to \YSpace$ is the \textit{decoder}
  mapping a variable-length binary string and a realisation of $Z$ to the output space $\YSpace$.
\end{itemize}
The basic identity that an (exact) stochastic code
must satisfy is that for all inputs $x \in \XSpace$,
\begin{align}
\dec(\enc(x, Z), Z) \sim P_{Y \mid X = x}
\end{align}
\par
It is a standard result that the average codelength of any stochastic code
must obey the bound \cite{harsha2010communication, li2024channel}
\begin{align}
\label{eq:stochastic_code_length_lower_bound}
  \MI{X}{Y} \leq \Ent[Y \mid Z] \leq \Exp[\abs{\enc(X, Z)}]
\end{align}
Beginning with the work of Harsha et al. \cite{harsha2010communication}, there has been a natural line of work to construct stochastic codes with cost as close as possible to the mutual information lower bound in \Cref{eq:stochastic_code_length_lower_bound}.
A significant milestone along this path is the work of Li and El Gamal \cite{li2018strong}, who describe a general scheme for constructing a stochastic code with expected codelength
\begin{align}
\label{eq:pfr_expected_codelength_bound}
\Exp[\abs{\enc(X, Z)}] \leq \MI{X}{Y} + \log(\MI{X}{Y} + 1) + 4 \end{align}
Sadly, we cannot do better than \Cref{eq:pfr_expected_codelength_bound} in general: there exist channels for which any stochastic code requires at least ${\MI{X}{Y} + \log(\MI{X}{Y} + 1) - \Oh(1)}$ bits \cite{braverman2014public,li2018strong}.
As such, following \cite{flamich2024data,flamich2026data}, we call stochastic codes that achieve a rate of $\MI{X}{Y} + \Oh(\log (\MI{X}{Y}))$ \textit{relative entropy codes}.
\par 
The necessity of the logarithmic gap between the lower and upper bounds prompts a natural follow-up question: are there any broad classes of channels for which we can either prove a tighter lower bound or construct a relative entropy code with better rate?
This question is precisely the focus of the works of Sriramu and Wagner \cite{sriramu2024optimal} and Flamich, Sriramu and Wagner \cite{flamich2025redundancy}, and indeed of the present paper.
These papers consider the ``asymptotic setting,'' meaning that for a pair of dependent random variables $X, Y \sim P_{X, Y}$ and product channel ${X^n \to Y^n}$, they study the best achievable rate
\begin{align}
R_n \defeq \frac{1}{n}\inf_{(Z, \enc, \dec)} \Exp[\abs{\enc(X^n, Z)}]
\end{align}
By the construction of the independent and identically distributed (iid) channel, we have that $\MI{X^n}{Y^n} = n \MI{X}{Y}$.
Combining this fact with \Cref{eq:stochastic_code_length_lower_bound,eq:pfr_expected_codelength_bound}, we have that $R_n \to \MI{X}{Y}$ as $n \to \infty$ and that this is optimal.
Thus, the papers refine the analysis and study the asymptotic \textit{logarithmic redundancy} 
\begin{align}
\label{eq:asymptotic_logarithmic_redundancy}
R^{\log} \defeq \lim_{n \to \infty} \frac{n (R_n - \MI{X}{Y})}{\log n}
\end{align}
Consulting \Cref{eq:stochastic_code_length_lower_bound,eq:pfr_expected_codelength_bound} once again, we immediately see that ${0 \leq R^{\log} \leq 1}$.
The main contribution of Sriramu and Wagner's work \cite{sriramu2024optimal} is identifying a simple condition called \textit{singularity} (see \Cref{sec:background} for the definition) on the channel ${X \to Y}$ which completely determines $R^{\log}$.
To be precise, they construct a family of relative entropy codes $\{(Z_n^{SW}, \enc_n^{SW}, \dec_n^{SW})\}$ based on rejection sampling whose asymptotic logarithmic redundancy for any channel $X^n \to Y^n$ satisfies
\begin{align}
R^{\log}_{SW} \defeq \lim_{n \to \infty} \frac{\Exp[\abs{\enc_n^{SW}(X^n, Z_n)}] - n \MI{X}{Y}}{\log n} \leq \frac{1}{2}
\end{align}
Furthermore, when the channel $X \to Y$ is singular, they show that $0 \leq R^{\log} \leq R^{\log}_{SW} = 0$.
This is complemented by further analysis in \cite{sriramu2024optimal,flamich2025redundancy}, which shows that when $X \to Y$ is non-singular, we have the lower bound $\frac{1}{2} \leq R^{\log}$.
Combining these results, we have the following clean characterisation of asymptotic logarithmic redundancy \cite{sriramu2024optimal,flamich2025redundancy}:
\begin{align}
\label{eq:asympt_log_redundancy}
R^{\log} =
\begin{cases}
0 &\text{when } X \to Y \text{ singular} \\
\frac{1}{2} &\text{when } X \to Y \text{ non-singular}.
\end{cases}
\end{align}
\par
However, the Sriramu-Wagner code for the achievability of \Cref{eq:asympt_log_redundancy} has three clear shortcomings: 
\begin{enumerate}
\item 
It relies on access to quantities that, despite being theoretically computable, are virtually impossible to compute, even for toy problems.
This intractability immediately precludes any practical utility of the code.
\item 
Adding to the impracticality, the construction exhibits a great deal of ``one-shot'' inefficiency: its rate has several sublogarithmic terms and large constants that only vanish in the limit $n \to \infty$.
\item
The reason \textit{why} singular channels achieve zero asymptotic logarithmic redundancy is not very clear. Essentially, singularity causes certain terms to cancel in the rate analysis of the code, but there is no clear interpretation of why one ought to expect this to occur.
\end{enumerate}
\par
This paper addresses these shortcomings by constructing a stochastic code, which we call the \textit{bits-back rejection sampler} (BBRS), and by providing a new proof of the achievability part of \Cref{eq:asympt_log_redundancy} in the singular case.
As we shall see, singularity plays a clear, central role in the construction of BBRS. Furthermore, although BBRS relies on the somewhat more advanced greedy rejection sampler, its one-shot rate is much better than the rate of the Sriramu-Wagner code.
Finally, the quantities required to implement BBRS are easier to compute in practice, facilitating implementation.
\subsection{Notation}
\par
In this paper, $\log$ denotes the binary logarithm.
We denote the set of finite-length binary strings as $\{0, 1\}^*$, and for a string $s \in \{0, 1\}^*$ we denote its length as $\abs{s}$.
Capital Roman letters denote random variables and probability measures.
Unless stated otherwise, we assume that all random variables are over some arbitrary Polish space.
For two random variables $X, Y$ and a probability measure $P$, we use a slight abuse of notation: $X \sim Y$ shall mean that $X$ and $Y$ are equal in distribution, while $X \sim P$ denotes that $X$ has probability distribution $P$.
For two probability measures $Q, P$ where $Q \ll P$, $dQ/dP$ denotes the Radon-Nikodym derivative of $Q$ with respect to $P$.
Bernoulli distributions with bias/success probability $p \in [0, 1]$ are denoted by $\Bernoulli(p)$, while geometric distributions with mean $M$ are denoted by $\GeomDist(M)$.
The Shannon entropy of a discrete random variable is denoted as $\Ent[X]$ and the mutual information between two variables $X, Y$ is denoted by $\MI{X}{Y}$.
\section{Background}
\label{sec:background}
\par
The five key concepts we require for our construction are \textit{singular channels}, \textit{rejection sampling}, \textit{greedy rejection sampling}, \textit{invertible sampling} and  \textit{bits-back coding}; we shall now cover them in this order.
\subsection{Singular Channels}
The basic motivation for singular channels is the observation that there exist certain channels for which all inputs that lead to a given output do so with the same probability. The generalisation for arbitrary $P_{Y \mid X}$ is the condition that only the support of $Y \mid X$ is affected by $X$, but not the precise likelihood:
\begin{definition}[Singular channel \cite{altuug2014refinement,sriramu2024optimal}]
\label{def:singular_channel}
Let $X, Y \sim P_{X, Y}$ be a pair of dependent random variables.
Then, we say that the channel $X \to Y$ is singular if there exists a $P_Y$-measurable function $g$ such that
\begin{align}
\label{eq:singularity_condition}   \frac{dP_{Y \mid X}}{dP_Y}(y \mid x) = g(y)\quad  P_{X, Y}\text{-almost surely.}
\end{align}
\end{definition}
The two canonical examples of singular channels are the additive uniform channel ${Y = X + U}$, where ${U \sim \Unif(-1, 1)}$, and the binary erasure channel. Importantly, for the rest of this paper, we shall assume that the function $g$ is known and shared by the communicating parties; this will be crucial for our construction of BBRS.
\subsection{Rejection Sampling}
\label{sec:rejection_sampling}
Here we give a brief description of rejection sampling and how we may use it to construct a stochastic code; see \cite{flamich2026data} for a tutorial and \cite{li2024channel} for a comprehensive survey of these concepts.
\par
Originating in computational statistics, rejection sampling aims to simulate a sample from some prescribed target distribution $Q$ given access to iid samples from some proposal distribution $P$.
Concretely, let $r = dQ/dP$ and assume that $r$ is bounded, that is, there exists $M \geq 1$ such that $\norm{r}_\infty \leq M$.
Then, at each step $k = 1, 2, \dotsc$, rejection sampling considers a sample $Z_k \sim P$, computes the \textit{acceptance probability} $\alpha(Z_k) = r(Z_k) / M$, and flips a coin $B_k \sim \Bernoulli(\alpha(Z_k))$.
If $B_k = 1$, the algorithm \textit{accepts} $Z_k$ and terminates, while if $B_k = 0$, it \textit{rejects} $Z_k$ and moves on to step $k + 1$.
Letting $K$ be the step in which the algorithm terminates, it is a standard result that $Z_K \sim Q$.
\par
Given a pair of dependent random variables $X, Y \sim P_{X, Y}$, we can use rejection sampling to construct a stochastic code for the channel $X \to Y$: given an input $x \sim P_X$, we use rejection sampling to simulate a sample from the target $P_{Y \mid X}$ using $P_Y$ as the proposal distribution.
First, we set the common randomness as an iid sequence of $P_Y$-distributed proposal samples:
$Z \gets \{Z_i\}_{i = 1}^\infty$.
Then, letting $r_x(y) = \frac{dP_{Y \mid X}}{dP_Y}(y \mid x)$ with upper bound $M_x$, given some input $x$ and the common randomness $Z$, we run rejection sampling to compute
\begin{align}
\label{eq:rs_accepted_index}
K(x, Z)\! =\! \min\{k \in \Nats \mid B_k = 1,\, B_k\! \sim\! \Bernoulli(\alpha(x, Z_k))\}
\end{align}
with the acceptance probability $\alpha(x, z) = r_x(z) / M_x$.
Given that each proposed sample $Z_k$ and acceptance decision $B_k$ is iid, we find that $K(x, Z) \sim \GeomDist(M_x)$, that is, the acceptance step is geometrically distributed with mean $M_x$.
Thus, assuming that both the sender and the receiver know $M_x$ (an unrealistic assumption in general, but as we shall see, for BBRS it will not be an issue), we can construct an entropy code $C$ (for example, a Huffman code) for $K(x, Z)$ whose average codelength is at most 
\begin{align}
\label{eq:rs_rate}
\Exp[\abs{C(K(x, Z))}] \leq \Ent[K(x, Z) \mid M_x] \!+\! 1 \leq \log M_x \!+\! 2
\end{align}
Hence, our final rejection code for $X \to Y$ is
\begin{equation}
\label{eq:rs_code}
\begin{aligned}
Z^{RS} &= \{Z_i\}, \quad Z_i \sim P_Y\\
\enc^{RS}(x, Z^{RS}) &= C(K(x, Z^{RS})) \\
\dec^{RS}(s, Z^{RS}) &= Z_{C^{-1}(s)} \quad Z_i \in Z^{RS}
\end{aligned}
\end{equation}
The issue with the stochastic code in \Cref{eq:rs_code} is twofold: first, the required upper bound $M_x$ is usually unknown to the decoder.
On the other hand, if we use a universal upper bound $M \geq \sup_{x} M_x$, the rate will be much larger than the optimum.
This issue is precisely why Harsha et al. developed greedy rejection sampling, which we shall discuss next.
\subsection{Greedy Rejection Sampling (GRS)}
\label{sec:grs}
\begin{algorithm}[t]
\SetAlgoLined
\DontPrintSemicolon
\SetKwInOut{Input}{Input}
\SetKwInOut{Output}{Output}
\Input{Target $Q$, Shared iid samples $X_1, X_2\hdots \sim P$.}
$L_1, S_1 \gets (0, 1)$ \;
\For{$k = 1$ \KwTo $\infty$}{
$\alpha_k \gets \min\left\{1, \max\left\{0, \left(\frac{dQ}{dP}(X_k) - L_k \right) \Big / S_k\right\} \right\}$ \;
$B_k \sim \Bernoulli(\alpha_k)$
 \If{$B_k = 1$}{
    \KwRet{$X_k, k$}
 }
 $L_{k + 1} \gets L_k + S_k$ \;
 $\lvlset_{k + 1} \gets \left\{ y \in \Omega \mid L_{k + 1} \leq \frac{dQ}{dP}(y) \right\}$ \;
 $S_{k + 1} \gets Q(\lvlset_{k + 1}) - L_{k + 1} \cdot P(\lvlset_{k + 1})$
}
\caption{Greedy rejection sampling, as formulated in \cite{flamich2023adaptive}.
See \Cref{sec:background} for notation.}
\label{alg:grs}
\end{algorithm}
Intuitively, the reason for the inefficiency of the rate of the rejection code in \Cref{eq:rs_code} is that the acceptance decisions of rejection sampling are not only independent but also identically distributed.
The key idea behind the greedy rejection sampler \cite{harsha2010communication,flamich2023adaptive} is to choose the acceptance probabilities according to the following criterion: in each step $k$, assuming the sampler hasn't terminated in a previous step, choose the acceptance probability $\alpha_k(z)$ to maximise the probability that the sampler terminates in the current step.
Indeed, this design principle of greedily maximising the termination probability is the algorithm's namesake.
\par
We describe GRS in \Cref{alg:grs} for completeness; for a detailed discussion and analysis of the algorithm, see \cite{harsha2010communication,flamich2023adaptive}.
For this paper, there are two important facts the reader needs to know.
First, GRS is also a rejection procedure and, in essence, differs from regular rejection sampling only in that the acceptance probabilities are not identically distributed.
Therefore, we can similarly use GRS to construct a stochastic code for a channel $X \to Y$. 
Second, letting
\begin{align*}
K'(x, Z) = \min\{k \in \Nats \mid B_k = 1,\, B_k \sim \Bernoulli(\alpha_k(x, Z_k))\}
\end{align*}
as Theorem III.2 and in particular Equation 3 of \cite{flamich2023adaptive} shows, we can construct an entropy code $C'$ for $K'$ such that
\begin{align}
\label{eq:grs_rate}
\Exp[\abs{C'(K'(X, Z))}] \leq \MI{X}{Y} + \log (\MI{X}{Y} + 1) + 5
\end{align}
As such, GRS is a general way of constructing a relative entropy code.
\subsection{Invertible Sampling}
The next ingredient we will need for our construction is invertible sampling, which sits at the heart of bits-back coding \cite{hinton1993keeping,townsend2018practical,flamich2023adaptive}.
For this, we need to explicitly consider the bit stream $s$ that the sender intends to transmit to the receiver and, accordingly, extend the definition of our codes. 
Concretely, henceforth we shall assume that for any discrete source distribution $P$ we can construct a stream code $C_P: \{0, 1\}^* \times \XSpace \to \{0, 1\}^*$ such that for any message stream $s \in \{0, 1\}^*$ and source $x \in \XSpace$, we have
\begin{align}
C^{-1}_P(C_P(s, x)) = (s, x)
\end{align}
and 
\begin{align}
\abs{C_P(s, x)} - \abs{s} \approx -\log P(x)
\end{align}
Examples of such stream codes are arithmetic coding \cite{witten1987arithmetic} and asymmetric numeral systems (ANS) \cite{duda2015use,bamler2022understanding}.
\par
The main idea behind invertible sampling is the answer to the following puzzle.  
Consider a very long string $S$ whose bits are each iid $\Bernoulli(1/2)$ and a discrete distribution $P$.
If we decode $(X, S') \gets C_P^{-1}(S)$, what is the distribution of $X$?
To find the intuitive answer, consider the reverse scenario first: encoding some $X' \sim P$ into the stream of uniform bits $S'$.
As we assumed that $C_P$ is a stream code and we are encoding a source sample $X'$ using its own distribution into $S'$, $C_P(S', X')$ ought to be a string of uniformly random bits.
Thus, if we now invert the procedure and start with $S$, and put $(X, S') = C_P^{-1}(S)$, then we ought to have $X \sim P$.
These arguments can be made precise, for example, see \cite[Chapter 13.3]{cover2012elements}, \cite{bamler2022understanding} and \cite{knuth1976complexity}.
\par
The reason why we call this notion \textit{invertible sampling} is because starting from a pseudo-random number generation (PRNG) perspective, we may consider the initial stream $S$ as the random seed or random state of a PRNG algorithm and the decoding procedure $(X, S') \gets C_P^{-1}(S)$ as simulating a random sample and advancing the random seed/state of the algorithm.
However, we can do more than usual: we can invert the sampler and recover the original seed $S$ by re-encoding $X$ into the stream $S'$ using $C_P$!
\subsection{Bits-Back Coding}
\label{sec:bits-back}
The purpose of bits-back coding is to construct a stream code for some source $X \sim P_X$ where $P_X$ is difficult to compute, but there exists a channel $P_{Y\mid X}$ and function $g$ such that it is easy to perturb $X$ using $X \to Y$ and recover $X$ by computing $g(Y)$.
Indeed, a high-level description of bits-back coding is \cite{flamich2024data}
\begin{align*}
\text{invertible sampling } +  \text{ error correction } = \text{bits-back coding}.
\end{align*}
More concretely, as a toy example, consider $X$ to be uniformly distributed over the codewords of the $(7, 4, 3)$-Hamming code and let $Y = X + \epsilon$, where $\epsilon$ is uniformly distributed over the Hamming ball of radius $1$ over $7$-bit strings.
Observe that we constructed this channel such that we can always recover $X$ from $Y$ by the very definition of Hamming codes; indeed, $g$ is the Hamming decoder.
Furthermore, we see that the marginal $P_Y$ is the uniform distribution over $7$-bit strings.
\par
Consider now constructing an entropy code for $X$.
Of course, in this toy setting, we could do this directly, but note that we can also take advantage of the error-correction procedure combined with invertible sampling as follows:
\begin{enumerate}
\item Assume the encoder has stream $S$ and receives a new symbol $X \sim P_X$ to encode.
\item They \textbf{decode} $(S', Y) \gets C_{P_{Y \mid X}}^{-1}(S)$.
Then, $Y \sim P_{Y \mid X}$.
\item They \textbf{encode} $S'' \gets C_{P_Y}(S', Y)$ and send $S''$.
\end{enumerate}
Somewhat magically, this procedure is completely invertible:
\begin{enumerate}
\item Upon receiving $S''$, the decoder uses the shared coding distribution $P_Y$ to recover ${(S', Y) \gets C_{P_Y}^{-1}(S'')}$
\item They ``error-correct'' $Y$ by computing $X \gets g(Y)$.
\item Finally, they recover the original stream by re-encoding $Y$ into $S'$: they compute $S \gets C_{P_{Y \mid X}}(S', Y)$, which they can do, since they now have access to $P_{Y \mid X}$.
\end{enumerate}
Note that at the end of the decoding procedure, the decoder has obtained the source $X$ as well as the original stream state $S$.
Furthermore, the encoding procedure is also efficient:
decoding $(S', Y)$ from the stream $S$ decreases the length of the stream by $\approx -\log P_{Y \mid X}(Y \mid X)$ bits, while encoding $Y$ into $S'$ using $P_Y$ increases the stream length by $\approx -\log P_{Y}(Y)$ bits.
Thus, in total, the stream length increases by roughly
\begin{align*}
-\log \frac{P_Y(Y)}{P_{Y \mid X}(Y \mid X)} = -\log \frac{P_X(X)}{P_{X \mid Y}(X \mid Y)} = -\log P_X(X)
\end{align*}
bits, where the first equality follows from Bayes' rule and the second follows from the fact that $X$ is a deterministic function of $Y$ and hence the denominator is simply equal to $1$.
\par
The most widespread application of bits-back coding is in the construction of the ANS stream code \cite{bamler2022understanding}, but it is also used in shuffle coding, a state-of-the-art entropy code for unordered/exchangeable data \cite{kunze2024entropy}, and bits-back quantisation, a relative entropy code for truncated distributions over the reals \cite{flamich2023adaptive}.
Finally, we note that bits-back coding can also be applied when $X$ cannot be fully recovered from $Y$: assuming we can only recover $\hat{X} = g(Y)$, we can then encode $X \mid \hat{X}$. 
However, we note that in this case, the procedure is no longer rate-optimal.
Indeed, this was the setting in which bits-back coding was first introduced to practical machine learning-based lossless compression by Townsend et al. \cite{townsend2018practical}.
\section{Bits-Back Rejection Sampling}
\par
We are now finally ready to present our stochastic code for singular channels, bits-back rejection sampling (BBRS).
\par
The high-level idea is very similar to that of the Sriramu-Wagner construction: instead of encoding a sample $Y \sim P_{Y \mid X}$ directly, we produce a two-part code by first encoding the quantised log-density ratio $\Gamma = Q_\Delta\left(\log \frac{dP_{Y \mid X}}{P_Y}(Y \mid X)\right)$ to $\Delta$ precision, and then encode a sample $Y \mid X, \Gamma$. As elucidated in~\cite{sriramu2024optimal}, the second sample can be communicated optimally using standard rejection sampling, as the density ratio $\Gamma$, and therefore the acceptance probability, is fixed and known to both encoder and decoder.
\par
However, our construction is based on the observation that when $X \to Y$ is singular, we can use the measurable function $g$ from \Cref{def:singular_channel} as an ``error-correction mechanism'' to recover $\Gamma$ from $Y$. Therefore, we can apply bits-back coding to recover almost all the bits we used to encode $\Gamma$!
\par
Concretely, let $P_{X, Y}$ be given with $X \to Y$ singular.
For $c \in \Reals$ and $\Delta > 0$, let $Q_\Delta(c) = \Delta \cdot \lfloor c/\Delta\rfloor$ be a $\Delta$-fine quantiser for the reals.
Let $\Gamma = Q_{\Delta}\left(\log \frac{dP_{Y \mid X}}{dP_Y}(Y \mid X)\right)$.
Then, the encoding procedure is:
\begin{enumerate}
\item Given $X \sim P_X$, the encoder simulates $\Gamma \sim P_{\Gamma \mid X}$.
\item Next, they use a GRS-based relative entropy code with $P_Y$-distributed shared randomness $\{Z_k\}_{k = 1}^\infty$ to encode a sample $Y' \sim P_{Y \mid \Gamma}$.
Concretely, they encode GRS's returned index $K$ using an appropriate entropy code.
\item The encoder knows that given $Y'$, the singularity of $X \to Y$ allows the decoder to recover $\Gamma$ by computing $\Gamma = Q_\Delta(\log g(Y'))$.
Thus, they use a simple rejection code to encode a sample from $Y \mid X, \Gamma$ using $P_{Y \mid \Gamma}$-distributed common randomness $\{\Upsilon_n\}_{n = 1}^\infty$.
To implement the rejection code, they compute two quantities:
\begin{align}
M(\Gamma) &= 2^{\Gamma + \Delta}\\
M(\Gamma, X) &= M(\Gamma) \Big/ \frac{dP_{X \mid \Gamma}}{P_X}(X \mid \Gamma)
\end{align}
It is not hard to show that $M(\Gamma, X)$ is an upper bound to $\frac{dP_{Y \mid X, \Gamma}}{dP_{Y \mid \Gamma}}$ and can therefore be used to run the rejection sampler \cite{sriramu2024optimal}; we provide a proof of this fact in \Cref{app:bbrs_technical_details} for completeness.
Finally, they encode the rejection sampler's returned index $N$ using an entropy code for the geometric distribution with mean $M(\Gamma)$. 
\end{enumerate}
The encoder also knows that once the decoder recovers $\Gamma$, they can ``replay'' the GRS procedure for sampling $Y \mid \Gamma$.
This observation then motivates a bits-back procedure: given the encoder's message stream $s$, the encoder \textbf{decodes the GRS acceptance decisions} from $s$, thus shortening the message.
They can do this because they know the decoder can re-encode the acceptance decisions back into the stream once it has recovered $\Gamma$.
Although the above is a full description of BBRS, for clarity, we also provide detailed pseudocode for the encoder and decoder in \Cref{alg:bbgrs_encoder,alg:bbgrs_decoder}, respectively.
We use the ANS stream code \cite{duda2015use,bamler2022understanding} as our invertible sampler in the pseudocode, which is a last-in-first-out code: the last symbol we encode into the stream is the first symbol we decode from it.
Note that \Cref{alg:bbgrs_encoder,alg:bbgrs_decoder} account for this by encoding the two-part code in the appropriate order.
\subsection{Rate Analysis of Bits-Back Rejection Sampling}
We now turn to the analysis of the BBRS rate and obtain 
\begin{theorem}[One-shot rate of BBRS]
\label{thm:singular_bbrs_one_shot_rate}
Let $X, Y \sim P_{X, Y}$ be a pair of dependent random variables with $X \to Y$ singular.
Let $(Z, \enc, \dec)$ be the stochastic code defined by \Cref{alg:bbgrs_encoder,alg:bbgrs_decoder}.
Then, we have
\begin{align}
\Exp[\abs{\enc(X, Z)}] \leq \MI{X}{Y} + \log (\Ent[\Gamma] \!+\! 1) + 2\Delta + 5
\end{align}
\end{theorem}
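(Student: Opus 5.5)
The plan is to account for the net change in stream length contributed by each of the three stages of \Cref{alg:bbgrs_encoder}, and then take expectations. Three terms appear:
\begin{enumerate}
\item the cost $\abs{C'(K)}$ of the GRS index for $Y' \sim P_{Y \mid \Gamma}$ with target $P_{Y\mid\Gamma}$ and proposal $P_Y$;
\item the cost of the rejection index $N$;
\item the \emph{savings} from decoding the GRS acceptance decisions out of the stream (bits-back).
\end{enumerate}
For stage 1, I apply \Cref{eq:grs_rate} to the channel $\Gamma \to Y$, giving at most $\MI{\Gamma}{Y} + \log(\MI{\Gamma}{Y}+1) + 5$. Since $\Gamma$ is a deterministic function of $Y$ by singularity, $\MI{\Gamma}{Y} = \Ent[\Gamma]$, so this is $\Ent[\Gamma] + \log(\Ent[\Gamma]+1) + 5$.

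For stage 2, $N$ is geometric with mean $M(\Gamma,X)$, but it is coded with the geometric code of mean $M(\Gamma) = 2^{\Gamma+\Delta}$, which the decoder knows once it has $\Gamma$. The expected cost is a cross-entropy. I will bound it directly using the ANS/stream-code idealisation $-\log P(n)$ under $\GeomDist(M(\Gamma))$. This gives at most roughly $\log M(\Gamma) + \Exp[N]/M(\Gamma)\cdot\log e$-type terms. Here $\Exp[N \mid \Gamma, X]/M(\Gamma) = (dP_{X\mid\Gamma}/dP_X)^{-1}$ averages to $1$ over $X \mid \Gamma$. That leaves, conditionally on $\Gamma$, $\log M(\Gamma) + O(1) = \Gamma + \Delta + O(1)$. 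Taking expectations and using $\Gamma \le \log \frac{dP_{Y\mid X}}{dP_Y}$ gives $\Exp[\Gamma] \le \MI{X}{Y}$. I expect the constants to absorb into the $+5$ and the $2\Delta$.

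For stage 3, the bits-back savings, decoding the acceptance decisions $B_1,\dots,B_K$ (with $B_K=1$) of GRS from the stream shortens it by $-\log \prod_k \Prob(B_k \mid \cdot)$. Here the key identity is that this equals $-\log P(K=k \mid \Gamma, Z)$ evaluated along the run. In expectation this is $\Ent[K \mid \Gamma, Z]$-like. Together with the fact that $(K,Z)$ determines $Y'$, which determines $\Gamma$, the encode cost minus the savings telescopes to $\Ent[K \mid Z] - \Ent[K\mid \Gamma, Z] = \MI{K}{\Gamma\mid Z} \le \Ent[\Gamma]$ in spirit. More precisely, I will show the net GRS contribution is at most $\log(\Ent[\Gamma]+1)+5$ plus the recovered $\Ent[\Gamma]$ cancelling. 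That is, the savings are at least $\Ent[\Gamma] - \Delta$ or so, so the $\Ent[\Gamma]$ from stage 1 cancels up to the second $\Delta$.

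The main obstacle is this cancellation step. \Cref{eq:grs_rate} only bounds the code length, not the exact surplus over the information content of the acceptance decisions. So I would instead redo the GRS bound at the level of $-\log$ probabilities. Concretely, I want $\Exp[\abs{C'(K)}] + \Exp[\log \Prob(B_{1:K}\mid \Gamma, Z)] \le \log(\Ent[\Gamma]+1) + 5 + \Delta$. For this I would use the GRS property $\Prob(K \ge k)$ versus $\log\frac{dP_{Y\mid\Gamma}}{dP_Y}(Y') = \log\frac{P(\Gamma(Y')\mid Y')}{P(\Gamma)} = -\log P_\Gamma(\Gamma)$, from \cite{flamich2023adaptive}. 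I then need to show that the information in the decisions equals $\log K$-ish terms minus $-\log P_\Gamma(\Gamma)$. If that proves too delicate, the fallback is to accept a slightly weaker constant by bounding the savings by $\Exp[-\log P_\Gamma(\Gamma)] = \Ent[\Gamma]$ via a Bayes/bits-back computation analogous to \Cref{sec:bits-back}. Summing the three stages then gives the claimed $\MI{X}{Y} + \log(\Ent[\Gamma]+1) + 2\Delta + 5$.
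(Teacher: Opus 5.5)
There is a genuine gap, and it is exactly the step you flagged as the main obstacle. The claimed cancellation of $\Ent[\Gamma]$ in the bits-back stage does not hold, and refining the GRS analysis will not produce it.

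Your identity is correct: the refund is $H = -\log \Prob[K = k \mid \Gamma, \{Z_k\}]$, so $\Exp[H] = \Ent[K \mid \Gamma, Z]$ with $Z = \{Z_k\}$. But $\Gamma = Q_\Delta(\log g(Z_K))$ is a function of $(K, Z)$ and is independent of $Z$. Hence $\Ent[K \mid \Gamma, Z] = \Ent[K \mid Z] - \Ent[\Gamma]$, and your own telescoping gives a net GRS contribution of $\MI{K}{\Gamma \mid Z} = \Ent[\Gamma]$. So the $\Ent[\Gamma]$ survives, which contradicts your next claim that the savings are at least $\Ent[\Gamma] - \Delta$. Since the zeta code for $K$ does not see $Z$, its cost is at least $\Ent[K] \geq \Ent[K \mid Z]$, so the net is in fact at least $\Ent[\Gamma]$.

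You can also see this directly. The ratio $\frac{dP_{Y \mid \Gamma}}{dP_Y}(y) = \1[Q_\Delta(\log g(y)) = \Gamma] / P_\Gamma(\Gamma)$ takes only two values, and with a two-valued ratio the greedy choice at every step is to accept every in-set proposal. So GRS accepts the first $Z_k$ with $Q_\Delta(\log g(Z_k)) = \Gamma$, every $A_k \in \{0, 1\}$, and the bits-back step recovers zero bits. Your fallback does not help either. The Bayes computation of the bits-back section, transplanted here, shows that $-\log P_\Gamma(\Gamma)$ is the net \emph{cost} of conveying $\Gamma$ through $Y'$, not a refund.

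The paper gets the cancellation by asserting $A_k \prod_{i<k}(1 - A_i) = \Prob[K = k \mid \{Z_k\}]$, hence $\Exp[H] = \Ent[K \mid \{Z_k\}] \geq \MI{Y}{\Gamma} = \Ent[\Gamma]$. However, the $A_k$ are built from $dP_{Y \mid \Gamma}/dP_Y$, so they depend on $\Gamma$, and the product is $\Prob[K = k \mid \Gamma, \{Z_k\}]$, as you correctly wrote. The inequality $\Ent[K \mid Z] \geq \Ent[\Gamma]$ lower-bounds the cost of $K$, not the refund, so do not try to close your gap by importing that step.

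In fact the stated bound fails for the algorithm as specified. The ANS cost of $N$ is at least $\Exp[\log M(\Gamma)] = \Exp[\Gamma] + \Delta \geq \MI{X}{Y}$, so the net stream growth satisfies $\Exp[\abs{\enc(X, Z)}] \geq \MI{X}{Y} + \Ent[\Gamma]$. For $n$ copies of a binary erasure channel with uniform input, $\log g_n(Y^n)$ is a binomial count, so $\Ent[\Gamma_n] = \frac{1}{2}\log n + \Oh(1)$. This eventually exceeds $\log(\Ent[\Gamma_n] + 1) + 2\Delta + 5$. Your stage-by-stage accounting, done correctly, yields only $\MI{X}{Y} + \Ent[\Gamma] + \log(\Ent[\Gamma] + 1) + \Oh(1 + \Delta)$. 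That is half-logarithmic asymptotic redundancy, not zero.
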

\begin{proof}
The rate of the algorithm consists of three parts: 
\begin{enumerate}
\item The number of bits required to encode the GRS index $K$, which encodes the sample $Y' \sim P_{Y \mid \Gamma}$.
\item The index $N$ which encodes $Y \sim P_{Y \mid X, \Gamma}$.
\item The number of bits by which the message is reduced from decoding the GRS acceptance decisions.
\end{enumerate}
Thus, we now analyse these quantities in this order.
\par
\textbf{Rate required for $\bm{K}$.}
From \Cref{sec:grs}, and \Cref{eq:grs_rate} in particular, assuming the sender and receiver both know $\MI{Y}{\Gamma}$, the number of bits required to encode $K$ is at most
\begin{align*}
\MI{Y\!}{\!\Gamma} + \log (\MI{Y\!}{\!\Gamma} + 1) + 4
= \Ent[\Gamma] + \log (\Ent[\Gamma] + 1) + 4
\end{align*}
where the equality holds as $\Gamma$ is a function of $Y$.
\par
\textbf{Rate required for $\bm{N}$.}
As we explained in \Cref{sec:rejection_sampling}, since we use standard rejection sampling with upper bound $M(X, \Gamma)$, $N$ will be geometrically distributed with mean $M(X, \Gamma)$.
Recall that the sender encodes $N$ using a geometric random variable with mean $M(\Gamma)$.
Then, noting that $\Exp[M(\Gamma, X) \mid \Gamma] = M(\Gamma)$, the average cost of encoding $N$ will be equal to the expected cross entropy
\begin{align*}
\Exp_X[&\Ent[\GeomDist(M(X, \Gamma))\,\Vert\, \GeomDist(M(\Gamma))]]\\
&= \Exp_{X, \Gamma}[\log M(\Gamma) - (M(X, \Gamma) - 1)\log(1 - 1/M(\Gamma))] \\
&= \Exp_{\Gamma}\left[\log M(\Gamma) - \left(M(\Gamma) - 1\right)\log(1 - 1/M(\Gamma))\right] \\
&= \Ent[\GeomDist(M(\Gamma))] \\
&\leq \Exp[\log M(\Gamma)] + 1 \\
&= \Exp[\Gamma + \Delta] + 1 \\
&\leq \MI{X}{Y} + 2\Delta + 1
\end{align*}
\par
\textbf{Number of bits decoded to sample $\bm{Y' \sim P_{Y \mid \Gamma}}$.}
Letting $A_k$ denote the acceptance probabilities of GRC, for a fixed $K = k$ the number of decoded bits $H$ is
\begin{align*}
H 
= -\log A_k -\sum_{i = 1}^{k - 1} \log (1 - A_i) 
= -\log\left(\!A_k \prod_{i = 1}^{k - 1} (1 - A_i)\!\right)
\end{align*}
However, note that $A_k \prod_{i = 1}^{k - 1} (1 - A_i) = \Prob[K = k \mid \{Z_k\}_{k = 1}^\infty]$.
Hence, $\Exp[H] = \Ent[K \mid \{Z_k\}_{k = 1}^\infty] \geq \MI{Y}{\Gamma} = \Ent[\Gamma]$, where the inequality follows from \Cref{eq:stochastic_code_length_lower_bound} and the last equality follows since $\Gamma$ is a deterministic function of $Y$.
\par
\textbf{Putting it all together.}
Summing the upper bounds on the rates for $K$ and $N$ and subtracting the lower bound on the number of decoded bits yields the desired result.
\end{proof}
To recover Sriramu and Wagner's upper bound, we use their Lemma 5 from \cite{sriramu2024optimal}, which essentially follows directly from the central limit theorem applied to  $\log\frac{dP_{Y^n \mid X^n}}{dP_Y^n}(Y^n \mid X^n)$:
\begin{lemma}
\label{lemma:gamma_asymp_ent}
Let $\Gamma_n = Q_\Delta\left(\log\frac{dP_{Y^n \mid X^n}}{dP_Y^n}(Y^n \mid X^n)\right)$.
Then, there is a constant $C > 0$ that depends on $P_{X, Y}$ and $\Delta$, but not $n$, such that
\begin{align}
\Ent[\Gamma_n] \leq \frac{1}{2}\log n + C
\end{align}
\end{lemma}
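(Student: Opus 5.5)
Write $L_n = \log\frac{dP_{Y^n \mid X^n}}{dP_Y^n}(Y^n \mid X^n) = \sum_{i=1}^n \ell(X_i, Y_i)$, where $\ell(x,y) = \log\frac{dP_{Y\mid X}}{dP_Y}(y\mid x)$ and the pairs $(X_i, Y_i)$ are iid. The plan is to bound $\Ent[\Gamma_n]$ through its second moment, using the fact that a lattice-valued variable with variance $\sigma^2$ has entropy of order $\log(\sigma/\Delta)$. The key steps are as follows.

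\textbf{Step 1: moments of $L_n$.} We have $\Exp[L_n] = n\MI{X}{Y}$. Assuming $\ell$ has finite variance $V$ (the dispersion; this is the standing regularity assumption in \cite{sriramu2024optimal}, and in the singular case it is automatic when $\ell = \log g(Y)$ is square integrable), we get $\mathrm{Var}(L_n) = nV$.

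\textbf{Step 2: transfer to $\Gamma_n$.} Since $\Gamma_n = Q_\Delta(L_n)$ satisfies $\abs{\Gamma_n - L_n} \le \Delta$, we have
\begin{align*}
\mathrm{Var}(\Gamma_n) \le \left(\sqrt{nV} + \Delta\right)^2.
\end{align*}

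\textbf{Step 3: maximum-entropy bound on a lattice.} The variable $\Gamma_n/\Delta$ takes integer values. Among integer-valued variables with variance $\sigma^2$, the entropy is at most
\begin{align*}
\tfrac{1}{2}\log\left(2\pi e(\sigma^2 + \tfrac{1}{12})\right).
\end{align*}
One way to see this is to add an independent $\Unif(0,1)$ dither: this makes the variable continuous without changing the entropy, and the Gaussian maximises differential entropy for a given variance (Cover--Thomas, Theorem 8.6.5). Applying this with $\sigma^2 = \mathrm{Var}(\Gamma_n)/\Delta^2$ gives
\begin{align*}
\Ent[\Gamma_n] \le \tfrac{1}{2}\log\left(2\pi e\left(\frac{(\sqrt{nV}+\Delta)^2}{\Delta^2} + \tfrac{1}{12}\right)\right) = \tfrac{1}{2}\log n + C
\end{align*}
for a constant $C$ depending only on $V$ and $\Delta$. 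The bound is uniform in $n \ge 1$, since the argument of the logarithm is at most $n$ times a constant.

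The only real obstacle is the moment assumption in Step 1. If $\ell$ lacked a finite second moment, I would instead truncate $\ell$ and control the tail contribution to $\Ent[\Gamma_n]$ separately. However, the paper invokes Lemma 5 of \cite{sriramu2024optimal} under its hypotheses, so I would simply inherit the finite-dispersion assumption stated there. This route needs only Chebyshev-type second-moment control rather than the central limit theorem itself.
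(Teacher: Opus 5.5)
Your proof is correct, but it takes a different route from the paper. The paper gives no argument of its own: it cites Lemma 5 of \cite{sriramu2024optimal} and motivates it by the central limit theorem for $L_n$, i.e.\ $\Gamma_n$ spreads over roughly $\sqrt{nV}/\Delta$ bins with a near-Gaussian profile. You replace the CLT by two elementary facts. The first is the Minkowski bound $\sqrt{\mathrm{Var}(\Gamma_n)} \le \sqrt{nV} + \Delta$. The second is the dithered maximum-entropy inequality for lattice-valued variables.

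Your route buys a fully non-asymptotic bound with an explicit constant. It holds for every $n \ge 1$ and needs only a finite second moment of $\ell$. A CLT-based argument needs extra work to turn convergence in distribution into an entropy bound, since entropy is not continuous under weak convergence. It would typically go through Berry--Esseen-type control of individual bins, and hence a third moment. In return, the CLT viewpoint also gives the matching lower bound $\Ent[\Gamma_n] \ge \frac{1}{2}\log n - \Oh(1)$ when $V > 0$, which explains why $\frac{1}{2}\log n$ is the right order. Only the upper bound is needed for the corollary, so your argument suffices.

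On your fallback remark: truncation cannot remove the moment hypothesis, because the statement is false without some such condition. Take the identity channel $Y = X$ on a discrete source; it is singular, with $g = 1/P_X$. Choose $P_X$ with finite entropy but with $-\log P_X(X)$ having a regularly varying tail of index $\alpha \in (1,2)$. Then $L_n$ fluctuates on scale $n^{1/\alpha}$, and $\Ent[\Gamma_n]$ grows like $\frac{1}{\alpha}\log n$, faster than $\frac{1}{2}\log n$. So importing the finite-dispersion assumption explicitly, as you do, is necessary; the paper's statement omits it.
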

Thus, combining \Cref{thm:singular_bbrs_one_shot_rate} with Lemma~\ref{lemma:gamma_asymp_ent}, we immediately obtain the desired corollary.
\begin{corollary}
Let $X, Y \sim P_{X, Y}$ be a pair of dependent random variables with $X \to Y$ singular.
Consider the family of stochastic codes $\{(Z_n, \enc_n, \dec_n)\}_{n = 1}^\infty$ with $(Z_n, \enc_n, \dec_n)$ the stochastic code formed by applying \Cref{alg:bbgrs_encoder,alg:bbgrs_decoder} sequentially to encode $n$ copies of $X \to Y$.
Then, the logarithmic redundancy of this family is 
\begin{align}
\lim_{n \to \infty}\frac{\Exp[\abs{\enc_n(X^n, Z_n)}] - n\MI{X}{Y}}{\log n} = 0
\end{align}
which also implies that the optimal logarithmic redundancy is $R^{\log} = 0$ as well.
\end{corollary}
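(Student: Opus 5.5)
The statement to prove is the corollary. I will apply \Cref{thm:singular_bbrs_one_shot_rate} to the product channel $X^n \to Y^n$ and then use Lemma~\ref{lemma:gamma_asymp_ent} to control the only $n$-dependent redundancy term.

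The first step is to check that the product channel is itself singular. If $g$ witnesses singularity of $X \to Y$, then, because the pairs are iid,
\begin{align*}
\frac{dP_{Y^n \mid X^n}}{dP_{Y^n}}(y^n \mid x^n) = \prod_{i=1}^n \frac{dP_{Y \mid X}}{dP_Y}(y_i \mid x_i) = \prod_{i=1}^n g(y_i) \eqqcolon g_n(y^n)
\end{align*}
holds $P_{X^n,Y^n}$-almost surely, since a finite intersection of almost-sure events is almost sure. Hence $X^n \to Y^n$ is singular with the shared function $g_n$.

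I will read ``applying the algorithms to encode $n$ copies'' as running BBRS once on the block channel $X^n \to Y^n$, with $\Gamma_n$ the quantised block log-ratio. This is the reading under which Lemma~\ref{lemma:gamma_asymp_ent} is stated. If instead the intended reading is $n$ separate one-shot runs, the constant overhead per copy would add up to $\Theta(n)$ and the claim could fail, so the block reading is the one to adopt.

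Next I apply \Cref{thm:singular_bbrs_one_shot_rate} with a fixed $\Delta$, which does not depend on $n$. Using $\MI{X^n}{Y^n} = n\MI{X}{Y}$, this gives
\begin{align*}
\Exp[\abs{\enc_n(X^n, Z_n)}] - n\MI{X}{Y} \leq \log(\Ent[\Gamma_n] + 1) + 2\Delta + 5 .
\end{align*}
By Lemma~\ref{lemma:gamma_asymp_ent}, $\Ent[\Gamma_n] \leq \tfrac12 \log n + C$, so the right-hand side is at most $\log(\tfrac12\log n + C + 1) + 2\Delta + 5 = \Oh(\log\log n)$. Dividing by $\log n$ shows the limsup of the normalised redundancy is at most $0$.

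For the matching lower bound, \Cref{eq:stochastic_code_length_lower_bound} gives $\Exp[\abs{\enc_n}] \geq \MI{X^n}{Y^n} = n\MI{X}{Y}$. The normalised redundancy is therefore nonnegative, its liminf is at least $0$, and so the limit exists and equals $0$.

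The statement about $R^{\log}$ then follows by sandwiching. We have $n R_n \leq \Exp[\abs{\enc_n}]$ because $R_n$ is an infimum over all stochastic codes, and $n R_n \geq n\MI{X}{Y}$ by the same lower bound. Together these give $0 \leq n(R_n - \MI{X}{Y})/\log n \leq \Oh(\log\log n)/\log n \to 0$.

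The main obstacle is justifying that the hypotheses of the one-shot theorem hold uniformly in $n$. The bound in \Cref{thm:singular_bbrs_one_shot_rate} assumes the parties know $\MI{Y^n}{\Gamma_n} = \Ent[\Gamma_n]$, and this is available since it is determined by $P_{X,Y}$, $n$ and $\Delta$. The lemma's constant $C$ must also not depend on $n$, which is exactly what Lemma~\ref{lemma:gamma_asymp_ent} asserts once $\Delta$ is fixed. Beyond these two checks the argument is a direct substitution.
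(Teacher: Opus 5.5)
Your proposal is correct and follows the paper's own argument. The paper obtains the corollary directly by applying \Cref{thm:singular_bbrs_one_shot_rate} to the block channel $X^n \to Y^n$ with $\Delta$ fixed, then bounding $\Ent[\Gamma_n] \leq \frac{1}{2}\log n + C$ via Lemma~\ref{lemma:gamma_asymp_ent}, so the redundancy is $\Oh(\log\log n)$. Your extra steps supply details the paper leaves implicit: singularity of the product channel via $g_n(y^n) = \prod_i g(y_i)$, the block reading of ``sequentially'', and the sandwich with \Cref{eq:stochastic_code_length_lower_bound} showing the limit exists and $R^{\log} = 0$.
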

Finally, we note that BBRS can be extended to non-singular channels as well by encoding $\Gamma \mid Y$; this is akin to the ``encoding the residuals'' idea we discussed in \Cref{sec:bits-back}.
Sadly, this scheme is no longer efficient, as it does not even achieve the $ 1/2$-logarithmic redundancy asymptotically, and thus we do not discuss it further.
\par
However, note that our main motivation for developing BBRS was to give a simpler code than the Sriramu-Wagner construction, but the non-singular case already has simple algorithms that achieve the optimum. 
As Sriramu and Wagner point out in their paper, in an observation attributed to Lucas Theis \cite{sriramu2024optimal}, the $1/2$-logarithmic redundancy can also be achieved by a small modification to existing relative entropy codes, such as GRS or the Poisson functional representation (PFR).
However, as they do not provide an explicit construction, we provide one for PFR in \Cref{app:pfr_asymp_coding} for completeness.
\section{Conclusion}
\par
We constructed a stochastic code for singular channels using a sampling algorithm we call the bits-back rejection sampler.
We analysed its rate and showed that in the asymptotic case, its logarithmic redundancy is zero, thereby recovering Sriramu and Wagner's result for singular channels.
However, our sampler uses singularity more explicitly than their code by embedding it in a bits-back procedure. 
\par
One unsatisfactory aspect of our sampler, which it shares with Sriramu and Wagner's construction, is that it relies on some arbitrary discretisation of the log-density ratio. As such, future work might aim either to eliminate this quantisation or to analyse its precise role in the sampler's behaviour.
\bibliographystyle{ieeetr}
\bibliography{references}

\clearpage
\appendices
\crefalias{section}{appendix}
\section{Minor Technical Details for BBRS}
\label{app:bbrs_technical_details}
Here, we show that $M(X, \Gamma)$ is a valid upper bound on the density ratio of rejection sampling in the second stage. 
This verifies the correctness of the scheme.
Our proof follows closely that of~\cite[Lemma 3]{sriramu2024optimal}. 
We rearrange the Radon-Nikodym derivative: 
\begin{align}
&\frac{dP_{Y \mid X, \Gamma}}{dP_{Y \mid \Gamma}} (y \mid x, \gamma)  \nonumber \\
&\quad=  \frac{dP_{\Gamma \mid X, Y}}{dP_{\Gamma \mid Y}} (\gamma \mid x, y) \, \frac{dP_{Y \mid X}}{dP_{Y}}(y \mid x) \, \frac{dP_{\Gamma}}{dP_{\Gamma \mid X}}(\gamma \mid x) \label{eq:chainrule} \\
&\quad=  \frac{P_{\Gamma \mid X, Y}(\gamma \mid x, y)}{P_{\Gamma \mid Y}(\gamma \mid y)} \, \frac{dP_{Y \mid X}}{dP_{Y}}(y \mid x) \, \frac{P_{\Gamma}(\gamma)}{P_{\Gamma \mid X}(\gamma \mid x)} \label{eq:discrete} \\
&\quad= \frac{P_{\Gamma}(\gamma)}{P_{\Gamma \mid X}(\gamma \mid x)} \frac{dP_{Y \mid X}}{dP_{Y}}(y \mid x) \label{eq:singular} 
\end{align}
In the above,~\Cref{eq:chainrule} follows by the chain rule for the Radon-Nikodym derivative;~\Cref{eq:discrete} follows because $\Gamma$ is discrete and therefore $P_{\Gamma}$, $P_{\Gamma \mid X}, P_{\Gamma \mid Y}$ and $P_{\Gamma \mid X, Y}$ are absolutely continous with respect to the counting measure; and~\Cref{eq:singular} holds because $X \to Y$ is singular and hence $P_{\Gamma \mid X, Y}$ and $P_{\Gamma \mid Y}$ are delta functions. 
Then:
\begin{align}
\sup_{ y \, : \, \Gamma = \gamma}\frac{dP_{Y \mid X, \Gamma}}{dP_{Y \mid \Gamma}} &(y \mid x, \gamma) \\
&\leq M(\Gamma) \Big/ \frac{dP_{X \mid \Gamma}}{P_X}(X \mid \Gamma) = M(\Gamma, X). \nonumber 
\end{align}
\section{Improved Poisson Functional Representation Coding scheme}
\label{app:pfr_asymp_coding}
Here, we give an explicit construction that the Poisson functional representation (PFR) can be moderately modified so that its code length attains the desired $1/2$-logarithmic redundancy. 
We do not describe PFR here; see the original paper \cite{li2018strong} for a description or \cite{li2024channel,flamich2024data,flamich2026data} for a more detailed account.
The reader only needs to know that, akin to rejection sampling and GRS, PFR also uses $P_Y$-distributed common randomness $\{Z_k\}_{k = 1}^\infty$ as proposal samples for the target distribution $P_{Y \mid X}$ and encodes the index $K$ of the selected sample.
Then, letting 
\begin{align}
R_x &= \frac{dP_{Y \mid X}}{dP_Y}(Z_K \mid x) \\
\Gamma_x &= Q_\Delta\left(\log R_x \right)\\
M(\Gamma_x) &= 2^{\Gamma_x + \Delta} + 1\\
M'(R_x) &= \Exp_{Z \sim P_Y}\left[\max\left\{\frac{dP_{Y \mid X}}{dP_Y}(Z \mid x), R_x\right\}\right]
\end{align}
we have the following result \cite{li2021unified,li2024channel,flamich2024data}:
\begin{align}
K \mid R_x \sim \GeomDist\left(M'(R_x)\right)    
\end{align}
Thus, consider now encoding $K$ using a geometric distribution with mean $M(\Gamma_x)$.
Then, the rate will be equal to the expected cross-entropy
\begin{align*}
\Exp_X[&\Ent[\GeomDist(M'(R_X))\,\Vert\, \GeomDist(M(\Gamma_X))]]\\
&= \Exp[\log M(\Gamma_X) - (M'(R_X) - 1)\log(1 - 1/M(\Gamma_X))]
\end{align*}
Treating these terms separately, for the first term, we see that 
\begin{align*}
&\Exp[\log M(\Gamma_X)] \\
&= \Exp[\log (2^{\Gamma_X + \Delta} + 1)]\\
&\leq \Exp[\log (R_X \cdot2^{\Delta} + 1)] \\
&= \Exp[\log R_X] + \Exp[\log (2^{\Delta} + 1/R_X)] \\
&\leq \Exp[\log R_X] + \Exp[\log (2^{\Delta}e^{1/R_X})] \\
&=\MI{X}{Y} + \Delta + \log e \cdot\Exp_{X, Y \sim P_{X, Y}}\left[\frac{dP_{Y}}{dP_{Y \mid X}}(Y \mid X)\right] \\
&= \MI{X}{Y} + \Delta + \log e
\end{align*}
For the second term, we first note that 
\begin{equation}
\label{eq:pfr_code_second_term_first_ineq}
\begin{aligned}
M'(R_x) 
&= \Exp_{Z \sim P_Y}\left[\max\left\{\frac{dP_{Y \mid X}}{dP_Y}(Z \mid x), R_x\right\}\right] \\
&\leq R_x + 1    
\end{aligned}
\end{equation}
and also that $R_x + 1 \leq M(\Gamma_X)$, which implies 
\begin{equation}
\label{eq:pfr_code_second_term_second_ineq}
\begin{aligned}
-\log\left(1-\frac{1}{M(\Gamma_x)}\right) &\leq  -\log\left(1-\frac{1}{R_x + 1}\right)
\end{aligned}    
\end{equation}
Applying \Cref{eq:pfr_code_second_term_first_ineq,eq:pfr_code_second_term_second_ineq} to the second term, we see that it is bounded above as
\begin{align*}
-\Exp[ &(M'(R_X) - 1)\log(1 - 1/M(\Gamma_X))] \\
&\leq 
-\Exp\left[ ((R_X + 1) - 1) \cdot \log\left(1-\frac{1}{R_X + 1}\right)\right]\\
&\leq \log e
\end{align*}
where the second inequality follows from the standard pointwise inequality: for $x \in (0, 1)$, we have $-x \ln x \leq 1$.
Finally, this shows that the cross-entropy is bounded above by
\begin{align}
\Ent[\GeomDist(M'(R_X))\,\Vert\,& \GeomDist(M(\Gamma_X))] \nonumber\\
&\leq \MI{X}{Y} + \Delta + 2\log e
\end{align}
Thus, we can construct a two-part code by first encoding $\Gamma_X$ followed by $K \mid R_X$.
The total rate will thus be less than
\begin{align*}
\MI{X}{Y} + \Ent[\Gamma] + \Delta + 2\log e + 1
\end{align*}
From this, we see that in the asymptotic setting, the only quantity contributing to the logarithmic redundancy is $\Ent[\Gamma_n]$, which by Lemma 5 of Sriramu and Wagner grows as $\frac{1}{2}\log n + \Oh(1)$, as desired.

\clearpage
\begin{figure}[H]
\centering
\begin{minipage}[H]{.48\textwidth}
\removelatexerror
\begin{algorithm}[H]
\SetAlgoLined
\DontPrintSemicolon
\SetKwIF{If}{ElseIf}{Else}{if}{:}{else if}{else}{end}
\SetKwFunction{ANSdecode}{ANSdecode}
\SetKwFunction{ANSencode}{ANSencode}
\SetKwFunction{loopBreak}{break}
\SetKwInOut{Input}{Input}\SetKwInOut{Output}{Output}
\textbf{Input:}\;
Channel input $X = x$\;
Common randomness $\{Z_k\}$ with $Z_k \sim P_Y$ for each $k$\;
Collection of common randomness $\{ \{ \Upsilon_n \}_{\Gamma = \gamma} \}_\gamma$, where for each sequence $\{\Upsilon_n\}_{\Gamma = \gamma}$, we have $\Upsilon_n \sim P_{Y \mid \Gamma = \gamma}$ for each $n$\;
Message bit-stream $s$\;
\textbf{Output:}\;
Message bit-stream $s$ with two-part code $K, N$ embedded in it with $Z_K \sim P_{Y \mid \Gamma}$ and $\Upsilon_N \sim P_{Y \mid X, \Gamma}$\;
\;
\tcp{Stage 1: Sample the quantised density ratio}
$\hat{Y} \sim P_{Y \mid X=x}$\;
$\Gamma \gets Q_\Delta\left(\log \frac{dP_{Y\mid X}}{dP_Y}(\hat{Y} \mid x)\right)$\;
\;
\tcp{Stage 2: GRS to obtain an index $K$ such that $Z_K \sim P_{Y \mid \Gamma}$}
$k, b \gets 1, 0$\;
\;
\While{$b = 0$}{
\tcp{Compute GRS acceptance probability. Note that $S_k$ and $L_k$ only depend on $ \frac{dP_{Y \mid \Gamma}}{dP_Y}$}
$A_k \gets \min\left\{1, \max\left\{0, \left(\frac{dP_{Y \mid \Gamma}}{dP_Y}(Z_{k}) - L_k\right) \Big/ S_k \right\}\right\}$\;
\;
\tcp{Obtain acceptance decision and updated stream with invertible sampler, shortening the stream}
$s, B_{k} \gets \ANSdecode(s \mid \Bernoulli(A_k))$\;
$b \gets B_{k}$\;
$k \gets k + 1$
}
\;
\tcp{Stage 3: rejection sampling an index $N$ such that $\Upsilon_N \sim P_{Y \mid X, \Gamma}$}
$M(\Gamma) \gets 2^{\Gamma + \Delta}$\;
$M(\Gamma, X) \gets M(\Gamma) \Big/ \frac{dP_{X \mid \Gamma}}{dP_X}(X \mid \Gamma)$\;
Select $\{\Upsilon_n\}_{\Gamma}$ from the collection of common randomness, i.e. where $\Upsilon_n \sim P_{Y \mid \Gamma}$
\;
\For{$n = 1, 2, \hdots$}{
$U_n \sim \Unif(0, 1)$\;
\If{$U_n \leq \frac{dP_{Y \mid \Gamma, X}}{dP_{Y \mid \Gamma}}(\Upsilon_n \mid \Gamma, x) \Big/M(\Gamma, X)$}{
\tcp{Upon acceptance, encode $K, N$ into the stream}
$s \gets \ANSencode(s, N \mid \GeomDist(M(\Gamma)))$\;
$s \gets \ANSencode(s, K \mid \zeta)$ \quad\tcp{Use optimal zeta code}
\Return{$s$}
}
}
\caption{Bits-back rejection sampling encoder}
\label{alg:bbgrs_encoder}
\end{algorithm}
\end{minipage}%
\end{figure}%
\begin{figure}[H]
\centering
\begin{minipage}[H]{.48\textwidth}
\removelatexerror
\begin{algorithm}[H]
\SetAlgoLined
\DontPrintSemicolon
\SetKwIF{If}{ElseIf}{Else}{if}{:}{else if}{else}{end}
\SetKwFunction{ANSdecode}{ANSdecode}
\SetKwFunction{ANSencode}{ANSencode}
\SetKwFunction{loopBreak}{break}
\SetKwInOut{Input}{Input}\SetKwInOut{Output}{Output}
\textbf{Input:}\;
Message bit-stream $s$\;
Common randomness $\{Z_k\}$ with $Z_k \sim P_Y$ for each $k$\;
Collection of common randomness $\{ \{ \Upsilon_n \}_{\Gamma = \gamma} \}_\gamma$, where for each sequence $\{\Upsilon_n\}_{\Gamma = \gamma}$, we have $\Upsilon_n \sim P_{Y \mid \Gamma = \gamma}$ for each $n$\;
\textbf{Output:}\;
$Y \sim P_{Y \mid \Gamma, X = x}$ and original message stream $s$\;
\;
\tcp{Stage 1: recover $K$ and hence $Z_K \sim P_{Y \mid \Gamma}$}
$(s, K) \gets \ANSdecode(s \mid \zeta)$\;
$\hat{Y} \gets Z_{K}$\;
\;
\tcp{Stage 2: compute $\Gamma$ using the singularity of the channel; see \Cref{eq:singularity_condition}.}
$\Gamma \gets Q_\Delta(\log g(\hat{Y}))$\;
\;
\tcp{Stage 3: recover $N$ and hence $\Upsilon_N \sim P_{Y \mid X, \Gamma}$}
$M(\Gamma) \gets 2^{\Gamma + \Delta}$\;
$N \gets \ANSdecode(s \mid \GeomDist(M(\Gamma))$\;
\;
\tcp{Stage 4: re-encode the GRS decision bits into the stream}
\For{$k = K, K-1,\dotsc,1$}{
\tcp{Compute acceptance probability}
$A_k \gets \min\left\{1, \max\left\{0, \left(\frac{dP_{Y \mid \Gamma}}{dP_Y}(Z_{k}) - L_k\right) \Big/ S_k \right\}\right\}$ \;
\;
\tcp{We know the $K$th decision was an accept, so re-encode a $1$.
All decisions before the $K$th one were rejects, so re-encode a $0$.}
$s \gets \ANSencode(s, 1 \textbf{ if } k = K \textbf{ else } 0 \mid \Bernoulli(A_k))$
}
\tcp{$\Upsilon_N$ is selected from the common randomness $\{\Upsilon_n\}$ where $\Upsilon_n \sim P_{Y \mid \Gamma}$}
\Return{$\Upsilon_N, s$}
\caption{Bits-back rejection sampling decoder}
\label{alg:bbgrs_decoder}
\end{algorithm}
\end{minipage}%
\end{figure}%

\end{document}